\documentclass[10pt,conference]{IEEEtran}
%

\usepackage{graphicx}
\usepackage{amsmath}
\usepackage{amsthm}
\usepackage{amsfonts}
\theoremstyle{}
\newtheorem{theorem}{Theorem}

\newtheorem{definition}{Definition}
\theoremstyle{remark}

\newtheorem{example}{Example}
\newtheorem{remark}{Remark}
\usepackage{bm}
\usepackage{booktabs}
\usepackage{multirow}
\usepackage{algorithm}
\usepackage{algpseudocode}
\usepackage{subfigure}
\makeatletter
\usepackage{cite}
\newcommand{\tabcaption}{\def\@captype{table}\caption}
\makeatletter
\ifCLASSINFOpdf
\else
\fi
\hyphenation{op-tical net-works semi-conduc-tor}

\begin{document}
%
\title{Placement Delivery Array Design through Strong Edge Coloring of Bipartite Graphs }



%
\author{\IEEEauthorblockN{Qifa Yan$^{\dag}$, Xiaohu Tang$^{\dag}$, Qingchun Chen$^{\dag}$ and Minquan Cheng$^{\ddag}$}
\IEEEauthorblockA{$\dag$ School of Information Science and Technology, Southwest Jiaotong University, Chengdu, 611756, China\\
$\ddag$ Guangxi Key Lab of Multi-source Information Mining $\&$ Security, Guangxi Normal University, Guilin, 541004, China\\
Emails: qifa@my.swjtu.edu.cn, \{xhutang,~qcchen\}@swjtu.edu.cn, chengqinshi@hotmail.com
}
}



\maketitle

\begin{abstract} The technique of coded caching proposed by Madddah-Ali and Niesen is a promising approach to alleviate the load of  networks during busy times.  Recently, placement delivery array (PDA) was presented to characterize both the placement and delivery phase in a single array for the  centralized coded caching algorithm.  In this paper, we interpret PDA from a new perspective, i.e., the strong edge coloring of bipartite graph. We prove that, a PDA is equivalent to a strong edge colored bipartite graph. Thus, we can construct a class of PDAs from existing structures in bipartite graphs. The class includes the scheme proposed by Maddah-Ali \textit{et al.} and a more general class of PDAs proposed by Shangguan \textit{et al.} as special cases. Moreover, it is capable of generating a lot of PDAs with flexible tradeoff between the sub-packet level and load.
\end{abstract}

\begin{keywords}
 Coded caching, placement delivery array, strong edge coloring, bipartite graphs
\end{keywords}

%
\IEEEpeerreviewmaketitle

\section{Introduction}
Video streaming has become the main impetus of the ongoing increasing wireless traffic.  In practice, excessive demands at busy times often lead to congestion in communication networks. One promising approach to mitigate the congestion is to duplicate some of contents that may be requested by users into  memories distributed across the network in advance, for example, at offpeak times \cite{Femtocaching2013}.

 In their seminal work on coded caching,  Maddah-Ali and Niesen proposed schemes that create multicasting opportunities by exploiting the caching \cite{maddah2013fundamental}. They modeled the content duplicating with a content placement phase, and the busy times with a content delivery phase. By jointly designing the two phases, the central server is able to deliver distinct contents to distinct users. Compared to conventional uncoded caching schemes, significant  gains can be achieved by coding in the delivery phase. The scheme is essentially centralized because the central server coordinates all the transmissions. The work has been extended to other scenarios, for example, decentralized  version \cite{maddah2013decentralized}, device to device networks \cite{Ji2016D2D}, online caching update \cite{maddah2013online} and hierarchical networks \cite{tree14}, \cite{Xiao2016tree} etc.

 In most of the existing coded caching schemes, each file has to be split into $F$ packets, where $F$ usually increases exponentially with the number of users $K$. This would be unpractical when $K$ is large. To study caching schemes that require smaller order of $F$, Yan \textit{et al.} reformulated the problem as a placement delivery array (PDA) design problem. A PDA clearly indicates what is cached by users and what is sent by the server in a single array. Thus, designing a centralized coded caching scheme for a given system is transformed into a problem of designing a proper PDA for given parameters. Accordingly, a construction was presented in \cite{Yan2015PDA}, which significantly decreases $F$, while the load only suffers from a slight sacrifice, in contrast to that in \cite{maddah2013fundamental}.

 Very recently, in view of PDA, Shangguan \textit{et al.} \cite{Ge2016hyper} extended the schemes in \cite{maddah2013fundamental} and \cite{Yan2015PDA} to a larger family of PDA. They showed that a PDA exists if and only if a corresponding linear $(6,3)$-free $3$-uniform $3$-partite hypergraph exists. They illustrated that, for some parameters, schemes that $F$ grows sub-exponentially with $K$ exist.

   In this paper, we establish the connection PDA with another topic in graph theory, i.e., strong edge coloring of bipartite graph.  We prove that a PDA exists if and only if a corresponding strong edge colored bipartite graph exists. This allows us to construct new PDAs from structured bipartite graphs. Particularly, we construct a class of PDA from bipartite graphs, which includes both the scheme in \cite{maddah2013fundamental} and its extended version in \cite{Ge2016hyper} as special cases. As a result, it contributes a larger cluster of PDA with more parameters. Particularly, it also generates some PDAs that are not covered by \cite{Ge2016hyper} such that $F$ increases sub-exponentially with $K$.

  The remainder  of this paper is as follows: In Section \ref{sec_prob}, we depict the system model and background on PDA.   In Section \ref{sec:graphPDA}, we introduce the strong edge coloring and its connection with PDA. Section \ref{sec:construction} depicts a new construction from bipartite graphs. Section \ref{sec:performance} presents some performance analysis of the new scheme. Finally, we conclude this paper in Section \ref{sec:con}.

\section{System Model and Background}\label{sec_prob}

\subsection{System Model}
Consider a caching system composed of a server containing $N$ files $\mathcal{W}=\{W_1,W_2,\cdots,W_N\}$ and    $K$ users $\mathcal{K}=\{1,2,\cdots,K\}$. The server is connected to the users through an error-free shared link, as illustrated in Fig. \ref{fig:system}.  The files in $\mathcal{W}$ are of equal size, without loss of generality (W.L.O.G), we assume each file is of unit size. Additionally, each user is equipped with a cache of size $M$.
\begin{figure}[htbp]
\centering\includegraphics[width=0.4\textwidth]{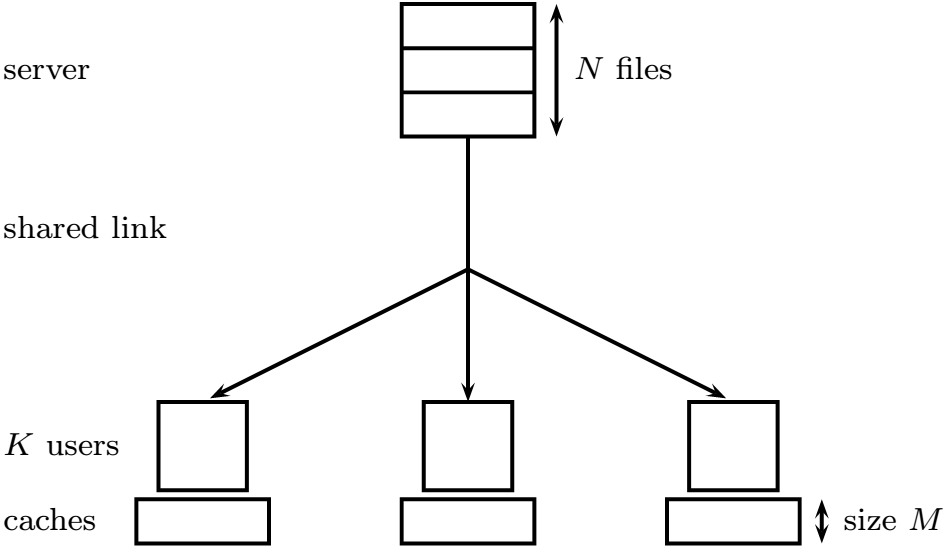}
\caption{A central server with $N$ popular files connected to $K$ users through an error-free shared link. Each file is of unit size  and each user is equipped with a cache of size $M$. In the figure, $N=K=3$, $M=1$.}\label{fig:system}
\end{figure}

We call such a system a $(K,M,N)$ caching system. This system operates in two separated phases:
\begin{enumerate}
  \item \textbf{Placement Phase:} A file is sub-divided into $F$ equal packets\footnote{Memory sharing technique may lead to non equally divided packets \cite{maddah2013fundamental}, in this paper, we will not discuss this case.}, i.e., $W_{i}=\{W_{i,j}:j=1,2,\cdots,F\}$. Each packet is of size $1/F$. Each user is accessible to the files set $\mathcal{W}$. Each packet is placed in different user caches determinately. Denote $\bm Z_k$ the cache contents of user $k$.
  \item \textbf{Delivery Phase:} Each user requests one file from $\mathcal{W}$ independently. Denote the request by $\bm d=(d_1,d_2,\cdots,d_K)$, i.e., user $k$ requests file $W_{d_k}$, where $k\in \mathcal{K}, d_k\in\{1,2,\cdots,N\}$. Once the server receives the request $\bm d$, it broadcasts a signal of at most $S_{\bm d}$ packets to users.  Each user is able to recover its requested file.
\end{enumerate}

 We refer to such a caching scheme as a $F$-division caching scheme, since each file is sub-divided into $F$ packets. Specially, the rate of the $F$-division caching scheme is defined as
 \begin{align}
 R\overset{\triangle}{=}\sup_{\bm d}\left\{\frac{S_{\bm d}}{F}\right\}\notag
 \end{align}
which should be as small as possible for a given  $(K,M,N)$ caching system.

\subsection{Background on Coded Caching and Placement Delivery Array}
The model was first formulated by Maddah-Ali and Niesen \cite{maddah2013fundamental}. For a $(K,M,N)$ caching system with $t\overset{\triangle}{=}{KM}/{N}\in\{0,1,\cdots,K\}$, they proposed a ${K\choose t}$-division caching scheme, which is called Ali-Niesen scheme in this paper. It turns out that the scheme can be conveniently depicted by  a single array defined as:
\begin{definition}(\textit{Placement Delivery Array}, \cite{Yan2015PDA})
For  positive integers $K,F, Z$ and $S$, a $F\times K$ array  $\bm{P}=[p_{j,k}]$, $1\leq j\leq F, 1\leq k\leq K$, composed of a specific symbol $``*"$  and $S$ positive integers
$1,2,\cdots, S$, is called a $(K,F,Z,S)$ placement delivery array (PDA) if it satisfies the following conditions:
\begin{enumerate}
  \item [C$1$.] The symbol $``*"$ appears $Z$ times in each column;
  \item [C2.] Each integer occurs at least once in the array;
  \item [C$3$.] For any two distinct entries $p_{j_1,k_1}$ and $p_{j_2,k_2}$,    $p_{j_1,k_1}=p_{j_2,k_2}=s$ is an integer  only if
  \begin{enumerate}
     \item [a.] $j_1\ne j_2$, $k_1\ne k_2$, i.e., they lie in distinct rows and distinct columns; and
     \item [b.] $p_{j_1,k_2}=p_{j_2,k_1}=*$, i.e., the corresponding $2\times 2$  subarray formed by rows $j_1,j_2$ and columns $k_1,k_2$ must be of the following form
  \begin{eqnarray*}
    \left[\begin{array}{cc}
      s & *\\
      * & s
    \end{array}\right]~\textrm{or}~
    \left[\begin{array}{cc}
      * & s\\
      s & *
    \end{array}\right]
  \end{eqnarray*}
   \end{enumerate}
\end{enumerate}
Furthermore, if each integer $s\in\{1,2,\cdots,S\}$ occurs exactly $g$ times, $\bm P$ is called $g$-$(K,F,Z,S)$ PDA, or $g$-PDA for short.
\end{definition}
With a given $(K,F,Z,S)$ PDA $\bm P=[p_{j,k}]$, a $F$-division caching scheme can be conducted by Algorithm \ref{alg:PDA}.
\begin{algorithm}[htb]
\caption{PDA based caching scheme}\label{alg:PDA}
\begin{algorithmic}[1]
\Procedure {Placement}{$\bm P$, $\mathcal{W}$}
\State Split each file $W_i\in\mathcal{W}$ into $F$ packets, i.e., $W_{i}=\{W_{i,j}:j=1,2,\cdots,F\}$.
\For{$k\in \mathcal{K}$}
\State \bm $\bm Z_k\leftarrow\{W_{i,j}:p_{j,k}=*, \forall~i=1,2,\cdots,N\}$
\EndFor
\EndProcedure
\Procedure{Delivery}{$\bm P, \mathcal{W},\bm d$}
\For{$s=1,2,\cdots,S$}
\State  Server sends $\bigoplus_{p_{j,k}=s,1\leq j\leq F,1\leq k\leq K}W_{d_{k},j}$.
\EndFor
\EndProcedure
\end{algorithmic}
\end{algorithm}

With Algorithm \ref{alg:PDA}, it was shown in \cite{Yan2015PDA} that the following fundamental theorem for PDA design.
\begin{theorem}(PDA based caching scheme, \cite{Yan2015PDA}) An $F$-division caching scheme for a $(K,M,N)$ caching system can be realized by a $(K,F,Z,S)$ PDA  with $Z/F=M/N$. Each user can decode his requested file correctly for any request $\bm d$ at the rate $R=S/F$.
\end{theorem}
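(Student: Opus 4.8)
The plan is to verify that Algorithm~\ref{alg:PDA}, when fed a $(K,F,Z,S)$ PDA $\bm P$ with $Z/F = M/N$, produces a legitimate caching scheme: the placement must respect the cache size $M$, every user must be able to reconstruct its requested file under any demand $\bm d$, and the delivered load must equal $S/F$. I would establish these three claims in turn, treating decodability as the crux.

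First, I would check the memory constraint. In the placement step user $k$ stores $W_{i,j}$ exactly when $p_{j,k} = *$, for every file index $i$. By condition C1 the symbol ``$*$'' occurs $Z$ times in column $k$, so user $k$ caches $Z$ packets from each of the $N$ files, i.e.\ $NZ$ packets. As each packet has size $1/F$, the occupancy is $NZ/F = N(Z/F) = N(M/N) = M$, which is exactly the hypothesis $Z/F=M/N$.

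The heart of the argument is decodability, which I expect to be the main obstacle. Fix $\bm d$ and a user $k$. Those packets $W_{d_k,j}$ with $p_{j,k}=*$ already lie in $\bm Z_k$, so it suffices to recover each $W_{d_k,j}$ for which $p_{j,k}=s$ is an integer. The server's $s$-th transmission is $\bigoplus_{p_{j',k'}=s}W_{d_{k'},j'}$, one of whose summands is the target $W_{d_k,j}$. The key step is to show user $k$ already holds every other summand. For any other entry $p_{j',k'}=s$ with $(j',k')\neq(j,k)$, I substitute $(j_1,k_1)=(j,k)$ and $(j_2,k_2)=(j',k')$ into condition C3: part~b then yields $p_{j_2,k_1}=p_{j',k}=*$, so that $W_{d_{k'},j'}\in\bm Z_k$. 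Hence user $k$ can XOR away all unwanted terms and isolate $W_{d_k,j}$. The delicate point is keeping the index bookkeeping of C3 straight, so that the cross-entry forced to be a star is $p_{j',k}$ (the one in column $k$) rather than $p_{j,k'}$.

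Finally, I would count the load and confirm completeness. By part~a of C3 no integer repeats within a column, so column $k$ contains exactly $F-Z$ integer entries, each distinct; the decoding step above recovers the corresponding $F-Z$ missing packets of $W_{d_k}$, establishing correctness for every $\bm d$. The delivery phase emits one coded packet for each integer $s\in\{1,\dots,S\}$, and condition C2 guarantees each such transmission is nonempty; being an XOR of packets of size $1/F$, each has size $1/F$. The total size broadcast is thus $S/F$, giving the rate $R=S/F$.
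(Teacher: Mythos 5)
Your proof is correct and follows essentially the same route as the standard argument in \cite{Yan2015PDA} (the paper itself states this theorem by citation without reproducing the proof): verify the cache size via C1 and $Z/F=M/N$, establish decodability by applying C3-b to force $p_{j',k}=*$ for every other summand in transmission $s$ so user $k$ can cancel it, and count $S$ transmissions of size $1/F$ each to get $R=S/F$. You also handle the one genuinely delicate point --- that the star forced by C3-b which matters is the cross-entry $p_{j',k}$ in user $k$'s own column, not $p_{j,k'}$ --- exactly as the original proof does.
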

 For a $(K,M,N)$ caching system,  with $M/N\in\{0,1/K,\cdots,1\}$, let $t=KM/N$, then the Ali-Niesen scheme is equivalent to a $(t+1)$-$\left(K,{K\choose t}, {K-1\choose t-1}, {K\choose t+1}\right)$ PDA \cite{Yan2015PDA}.

The scheme was extended to a more generalized construction by Shangguan \textit{et al.} in \cite{Ge2016hyper}:
 %
 For every three positive integers $m,a,b$, such that $a+b\leq m$, there exists an ${a+b\choose a}$-$\left({m\choose a},{m\choose b},{m\choose b}-{m-a\choose a},{m\choose a+b}\right)$ PDA.

\section{Strong Edge Coloring of Bipartite Graphs and Its Relation to PDA}\label{sec:graphPDA}
For clarity, we introduce several definitions from graph theory.

\begin{definition} Let $G$ be a graph.  $\mathcal{E}$ is the edge set of graph $G$. A (proper) edge coloring of $G$ is an assignment of colors to the edges $\mathcal{E}$ such that, no adjacent edges have the same color. The smallest number of colors needed in a (proper) edge coloring of a graph $G$ is the chromatic index of $G$.
\end{definition}

A strong edge coloring is actually a special edge coloring \cite{Fau1990strong}, i.e.,
\begin{definition} For a graph $G$ with edge set $\mathcal{E}$,  a strong edge coloring of $G$ is an edge coloring such that,  any pair of the edges with the same color are not  adjacent to any third edge in $\mathcal{E}$.
 The strong chromatic index $sq(G)$, is defined as the smallest number of colors in all possible strong edge colorings on $G$.
\end{definition}

In this paper, we focus on bipartite graphs. We denote a bipartite graph by $G(\mathcal{K},\mathcal{F}, \mathcal{E})$ where $\mathcal{K},\mathcal{F}$ are two disjoint vertical  sets and $\mathcal{E}\subset \mathcal{K}\times \mathcal{F}$ is the edge set.

Given a $F\times K$ array $\bm{A}=[a_{j,k}]$, composed of a specific symbol $``*"$ and integers $1,2,\cdots,S$ (W.L.O.G, we assume each integer occurs at least once in $\bm A$),   we construct a bipartite graph  $G(\mathcal{K},\mathcal{F},\mathcal{E})$  as follows:

Let $$\mathcal{K}=\{1,2,\cdots,K\}$$ $$\mathcal{F}=\{1,2,\cdots,F\}$$ and
\begin{align}
\mathcal{E}=\bigcup_{s=1}^S \mathcal{E}_s\label{eqn:E}
\end{align}
where
\begin{align}
\mathcal{E}_s=\{(k,j)\in\mathcal{K}\times\mathcal{F}:a_{j,k}=s \}\label{eqn:Es}
\end{align}
Then we color the edges in $\mathcal{E}_s$ by color $s$.

Conversely, given an edge colored bipartite graph $G(\mathcal{K},\mathcal{F},\mathcal{E})$, where $\mathcal{K}=\{1,2,\cdots,K\},\mathcal{F}=\{1,2,\cdots,F\}$, and  edges $\mathcal{E}$ are colored by colors $1,2,\cdots,S$, we can construct an $F\times K$ array $A=[a_{j,k}]$ composed of $``*"$ and $1,2,\cdots, S$ as follows:
\begin{align}
a_{j,k}=\left\{\begin{array}{cc}
                 * &  \mbox{if}~ (k,j)\notin \mathcal{E} \\
                 s &  \mbox{if}~ (k,j) \mbox{ is colored by}~ s.\
               \end{array}
\right.\label{eqn:ajk}
\end{align}

Obviously, an array $\bm A$ composed of $``*"$ and integers $1,2\cdots,S$ can be mapped to a colored bipartite graph in this way, and vice versa. An interesting result corresponding to PDA is:
\begin{theorem}\label{thm:bigraph}
For any $F\times K$ array $\bm A=[a_{j,k}]$, composed of a special symbol $``*"$ and integers $1,2,\cdots,S$, $\bm A$ is a PDA if and only if in its corresponding colored bipartite graph $G(\mathcal{K}, \mathcal{F},\mathcal{E})$,
\begin{enumerate}
  \item \label{item1} The verticals in $\mathcal{K}$ has a constant degree;
  \item \label{item2} The corresponding coloring is a strong edge coloring on $G(\mathcal{K},\mathcal{F},\mathcal{E})$.
\end{enumerate}
\end{theorem}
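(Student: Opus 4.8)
The plan is to prove both directions at once by setting up a precise dictionary between the entries of $\bm A$ and the edges of $G(\mathcal{K},\mathcal{F},\mathcal{E})$, and then translating each defining condition of a PDA (C1--C3) into the corresponding graph-theoretic statement. Under the construction in \eqref{eqn:E}--\eqref{eqn:ajk}, an integer entry $a_{j,k}=s$ is the same datum as an edge $(k,j)\in\mathcal{E}_s$ carrying color $s$, while a star entry $a_{j,k}=*$ is the same datum as the absence of the edge $(k,j)$. Since the map between arrays and colored bipartite graphs is a bijection, I only need to check that $\bm A$ satisfies C1--C3 exactly when the graph satisfies Conditions \ref{item1}--\ref{item2}.

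First I would dispose of the degree condition. The number of stars in column $k$ equals $F$ minus the number of integer entries in that column, and the latter equals the degree of vertex $k\in\mathcal{K}$ in $G$. Hence ``each column contains the same number $Z$ of stars'' (C1) holds if and only if every vertex of $\mathcal{K}$ has the same degree $F-Z$, which is Condition \ref{item1}. Condition C2 requires each of $1,\dots,S$ to appear, but this is exactly the running assumption that every color class $\mathcal{E}_s$ is nonempty, so it imposes nothing beyond the standing convention and can be recorded and set aside.

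The heart of the argument is the equivalence of C3 with the strong-edge-coloring Condition \ref{item2}, and here I would translate parts (a) and (b) separately. Two integer entries $a_{j_1,k_1}=a_{j_2,k_2}=s$ are two edges $(k_1,j_1),(k_2,j_2)$ of the same color $s$. Part C3a demands $j_1\neq j_2$ and $k_1\neq k_2$, which says precisely that these two same-colored edges share no vertex --- in a bipartite graph, sharing a vertex means sharing a row index or a column index --- i.e.\ that every color class is a matching (a proper coloring). Part C3b demands $a_{j_1,k_2}=a_{j_2,k_1}=*$, i.e.\ that the two ``cross'' positions are stars; translated, the edges $(k_2,j_1)$ and $(k_1,j_2)$ are \emph{absent} from $\mathcal{E}$. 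The key observation --- and the only place where bipartiteness is really used --- is that these two cross edges are exactly the possible ``third edges'' joining the disjoint pair $(k_1,j_1),(k_2,j_2)$: because $\mathcal{E}\subseteq\mathcal{K}\times\mathcal{F}$, any edge meeting both $\{k_1,j_1\}$ and $\{k_2,j_2\}$ must be of the form $(k_1,j_2)$ or $(k_2,j_1)$, there being no edges inside $\mathcal{K}$ or inside $\mathcal{F}$. Thus C3b is equivalent to the statement that no third edge is adjacent to both same-colored edges, so that each color class is an \emph{induced} matching.

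Combining the two translations, C3 holds if and only if each color class of $G$ is an induced matching, which is exactly Condition \ref{item2}, the strong edge coloring requirement; together with the degree equivalence this yields the theorem in both directions. I expect the only genuinely delicate step to be the bookkeeping in the third paragraph: one must verify that the four entries of the $2\times2$ subarray in C3 correspond under the dictionary to the two edges $(k_1,j_1),(k_2,j_2)$ together with \emph{precisely} the two candidate connecting edges, and that bipartiteness rules out any other third edge. Once this identification is made explicit, the remainder is a direct restatement.
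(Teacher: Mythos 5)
Your proposal is correct and follows essentially the same route as the paper's own proof: translating C1 into the constant-degree condition on $\mathcal{K}$, C2 into the nonemptiness of the color classes, and C3a/C3b into each color class being an induced matching, with the same key use of bipartiteness to show that the only candidate third edges joining $(k_1,j_1)$ and $(k_2,j_2)$ are the cross edges $(k_1,j_2)$ and $(k_2,j_1)$. The only cosmetic difference is that you phrase it as a single biconditional dictionary while the paper checks the two implications separately; the content is identical.
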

\begin{proof}
Assume $\bm A$ is a $(K,F,Z,S)$ PDA. Then there are $Z$ $``*"$  and $F-Z$ integers in each column by C$1$. Thus, by the edge construction  \eqref{eqn:E} and \eqref{eqn:Es}, for each given $k\in\mathcal{K}$, there are $F-Z$ elements in $\mathcal{F}$ such that $(k,j)\in\mathcal{E}$, i.e., the verticals in $\mathcal{K}$ have the same degree.

Moreover, we claim that for any two distinct edges with the same color $s$, i.e., $(k_1,j_1), (k_2,j_2)\in \mathcal{E}_s$,
they are not adjacent or adjacent to a third edge in $\mathcal{E}$.  By \eqref{eqn:Es}, we have
\begin{align}
p_{j_1,k_1}=p_{j_2,k_2}=s\notag
\end{align}
Thus, by C$3$-a, $k_1\neq k_2$, $j_1\neq j_2$. Thus, $(k_1,j_1), (k_2,j_2)$ are not adjacent to each other. Furthermore, by C$3$-b,
\begin{align}
p_{j_1,k_2}=p_{j_2,k_1}=*\notag
\end{align}
 by the construction of $\mathcal{E}$, i.e., \eqref{eqn:E} and \eqref{eqn:Es},  the edges $(k_1,j_2),$ $(k_2,j_1)$ are not in $\mathcal{E}$. Obviously, the edges $(k_1,k_2)$ and $(j_1,j_2)$ are not in $\mathcal{E}$ since $G$ is a partite graph. Thus, $(k_1,j_1)$ and $(k_2,j_2)$ are not adjacent to any third edge in $\mathcal{E}$. Thus, the corresponding coloring is a strong edge coloring on $G(\mathcal{K},\mathcal{F},\mathcal{E})$.

Conversely, given a colored bipartite graph $G(\mathcal{K},\mathcal{F},\mathcal{E})$, satisfying \ref{item1}),~\ref{item2}), we only need to verify that C$1$-C$3$ holds for $\bm A$. Since each $k\in\mathcal{K}$ has constant degree, for example $m$, there are $F-m$ verticals in $\mathcal{F}$ not connected to $k$. Thus, by \eqref{eqn:ajk}, there are $F-m$ $``*"$s in each column of $\bm A$, i.e., C$1$ holds with $Z=F-m$. C$2$ is trivial with the assumption each color at least colors one edge. For C$3$, assume two distinct elements $a_{j_1,k_1}=a_{j_2,k_2}=s$, then by \eqref{eqn:ajk}, two distinct edges $(k_1,j_1)$, $(k_2,j_2)$ are colored by $s$. Since the coloring is a strong edge color, $(j_1,k_1)$ and $(j_2,k_2)$ are not adjacent to each other, and there are no a third edge connects them, i.e., $k_1\neq k_2, j_1\neq j_2$ and $(k_1,j_2), (k_2,j_1)\notin\mathcal{E}$, thus by \eqref{eqn:ajk},
\begin{align}
a_{j_2,k_1}=a_{j_1,k_2}=*\notag
\end{align}
\end{proof}

Theorem \ref{thm:bigraph} establishes a connection between PDAs and  strong colored bipartite graphs. The advantage is that, we can utilize any existing result on the strong chromatic index of bipartite graphs in the literature to find new constructions of PDA, and furthermore to recover more coded caching schemes. We will discuss more about this in Section \ref{sec:construction}.

 We use the following example to illustrate Ali-Niesen scheme,  its relation to PDA and the corresponding strong colored bipartite graph.
\begin{example}
For a $(4,2,4)$ caching system, $t=2$. The PDA corresponding to Ali-Niesen scheme   is given by
\begin{align}
\bm A^{4,2}=\left[\begin{array}{cccc}
  * & * & 1 & 2 \\
  * & 1 & * & 3 \\
  * & 2 & 3 & * \\
  1 & * & * & 4 \\
  2 & * & 4 & * \\
  3 & 4 & * & *
\end{array}\right]\label{eqn:A}
\end{align}
Recall that the rows in $\bm A^{4,2}$ correspond to sets $\{1,2\},$ $\{1,3\},$ $\{1,4\},$ $\{2,3\},$$\{2,4\},$$\{3,4\}$ (see \cite{maddah2013fundamental}, \cite{Yan2015PDA} for details). Thus, $\bm A^{4,2}$  can be represented by a strong edge colored bipartite graph in Fig. \ref{fig:A42}.
\begin{figure}[htbp]
\centering\includegraphics[width=0.45\textwidth]{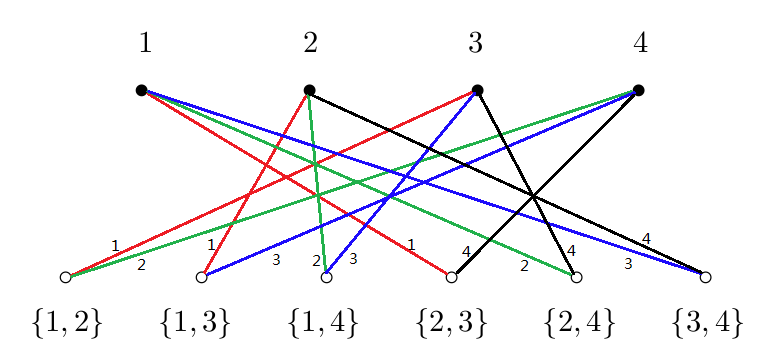}
\caption{The strong edge colored bipartite graph corresponding to PDA $\bm A^{4,2}$. The solid verticals correspond to users and the empty verticals correspond to packets. The edges are colored with $4$ colors, i.e., 1-red; 2-green; 3-blue; 4-black.  } \label{fig:A42}
\end{figure}

Each file $W_{i}$ ($i=1,2,3,4$) is divided into $6$ packets, i.e.,  $W_{i,\{1,2\}}, W_{i,\{1,3\}}, W_{i,\{1,4\}},W_{i,\{2,3\}},W_{i,\{2,4\}},W_{i,\{3,4\}}$ corresponding to the $6$ rows in $\bm A^{4,2}$ or the empty circles in Fig. \ref{fig:A42}. Thus, in the  placement phase, the contents in the users are
\begin{align}
\bm Z_1=\{W_{i,\{1,2\}},W_{i,\{1,3\}},W_{i,\{1,4\}},~i=1,2,3,4 \}\notag\\
\bm Z_2=\{W_{i,\{1,2\}},W_{i,\{2,3\}},W_{i,\{2,4\}},~i=1,2,3,4 \}\notag\\
\bm Z_3=\{W_{i,\{1,3\}},W_{i,\{2,3\}},W_{i,\{3,4\}},~i=1,2,3,4 \}\notag\\
\bm Z_4=\{W_{i,\{1,4\}},W_{i,\{2,4\}},W_{i,\{3,4\}},~i=1,2,3,4 \}\notag
\end{align}
i.e.,  a user vertical $k$ (a solid circle) store the packet $j$ if and only if $j$ is not connected to $k$ in Fig. \ref{fig:A42}.

In the delivery phase, W.L.O.G, we assume $\bm d=(1,2,3,4)$, then Table \ref{table1} lists the transmitting steps:
   \begin{table}[!htp]
  \centering
  \caption{Deliver steps of $\bm A^{4,2}$} \label{table1}
  \normalsize{
  \begin{tabular}{cc}
    \toprule  \toprule
   Time Slot& Transmitted Signnal  \\
   \midrule
   $1$&$W_{1,\{2,3\}}\oplus W_{2,\{1,3\}}\oplus W_{3,\{1,2\}}$\\
   $2$&$W_{1,\{2,4\}}\oplus W_{2,\{1,4\}}\oplus W_{4,\{1,2\}}$\\
  $3$& $W_{1,\{3,4\}}\oplus W_{3,\{1,4\}}\oplus W_{4,\{1,3\}}$\\
   $4$& $W_{2,\{3,4\}}\oplus W_{3,\{2,4\}}\oplus W_{4,\{2,3\}}$\\
 \bottomrule \bottomrule
  \end{tabular}}
\end{table}
\end{example}

\section{Constructions of PDA from Strong Edge Colored Bipartite Graphs}\label{sec:construction}
The edge coloring is an active research area. Unfortunately, for the strong chromatic index of bipartite graphs, there are rare results. For general case, only some small parameters results exist, for example,\cite{note2008} and \cite{Jul2016} discuss the strong edge coloring of bipartite graph whose one part of verticals has only at most $2$ or $3$ degrees. However, some specific structures do exist. In this section, we construct a class of PDAs  from the existing structures in bipartite graph.

Define $S_m(a,b,\lambda)=G(\mathcal{X},\mathcal{Y},\mathcal{E})$ ($0\leq \lambda\leq \min\{a,b\}$) where the verticals of $\mathcal{X}$ are the $a$-subsets of $\{1,2,\cdots,m\}$ (i.e., the subsets of cardinality $a$), the verticals of $\mathcal{Y}$ are the $b$-subsets of $\{1,2,\cdots,m\}$. Two verticals $X\in\mathcal{X}$ and $Y\in\mathcal{Y}$, $X$ is adjacent to $Y$ if and only if $|X\cap Y|=\lambda$.

For $S_{m}(a,b,\lambda)$,  the authors in \cite{Jen1997subset} gave two coloring schemes:
\begin{enumerate}
  \item[S$1$.] Given disjoint sets $D,I\subset\{1,2,\cdots,m\}$, where $|D|=a+b-\lambda$, $|I|=\lambda$, color the edges
  \begin{align}
  \mathcal{E}_{D,I}=\{(X,Y)\in\mathcal{E}:X\backslash Y\cup Y\backslash X=D, X\cap Y=I\}\notag
  \end{align}
  with the same color.
  \item[S$2$.] Given disjoint sets $U,V\subset\{1,2,\cdots,m\}$, where $|U|=a-\lambda, |V|=b-\lambda$, color the edges
  \begin{align}
  \mathcal{E}^{U,V}=\{(X,Y)\in\mathcal{E}:X\backslash Y=U,Y\backslash X=V\}\notag
  \end{align}
  with the same color.
\end{enumerate}

 \cite{Jen1997subset} has proved that both S$1$ and S$2$ are strong edge colorings on $S_m(a,b,\lambda)$, which use
  \begin{align}
  {m\choose a+b-2\lambda}{m-(a+b-2\lambda)\choose \lambda}\label{eqn:S1}
  \end{align}
   and
 \begin{align}
 {m\choose a+b-2\lambda}{a+b-2\lambda\choose a-\lambda }\label{eqn:S2}
 \end{align}
  colors respectively.

Now, according to Theorem \ref{thm:bigraph}, it is easy to obtain the following conclusion:
\begin{theorem}\label{thm:subset} For any $m,a,b,\lambda\in \mathbb{N}^+$ with $0<a<m,0<b<m, 0\leq \lambda\leq\min\{a,b\}$, there exists a $g$-$(K,F,Z,S)$ PDA, where
\begin{align}
&K={m\choose a},~ F={m\choose b},~Z={m\choose b}-{a\choose \lambda}{m-a\choose b-\lambda}\notag\\
&S=\notag\\
&{m\choose a+b-2\lambda}\cdot\min\left\{{m-(a+b-2\lambda)\choose \lambda},{a+b-2\lambda\choose a-\lambda }\right\}\notag
\end{align}
and
\begin{align}
g=\max\left\{{a+b-2\lambda\choose a-\lambda},{m-(a+b-2\lambda)\choose \lambda}\right\}\notag
\end{align}
\end{theorem}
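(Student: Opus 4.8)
The plan is to realize the claimed PDA as the array associated, via Theorem~\ref{thm:bigraph}, to the subset graph $S_m(a,b,\lambda)=G(\mathcal{X},\mathcal{Y},\mathcal{E})$ equipped with one of the two colorings S$1$, S$2$ of \cite{Jen1997subset}. Since $\mathcal{X}$ is the set of $a$-subsets of $\{1,\dots,m\}$ and $\mathcal{Y}$ the set of $b$-subsets, we have $|\mathcal{X}|={m\choose a}=K$ and $|\mathcal{Y}|={m\choose b}=F$; I would therefore set $\mathcal{K}=\mathcal{X}$ (columns) and $\mathcal{F}=\mathcal{Y}$ (rows), and let $\bm A$ be the $F\times K$ array built from the coloring through \eqref{eqn:ajk}. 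By Theorem~\ref{thm:bigraph} it then remains to check the two graph conditions, and in addition to verify that the coloring uses every color the same number of times so that $\bm A$ is in fact a $g$-PDA.

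Condition~\ref{item1}) is a degree count: a $b$-subset $Y$ is adjacent to a fixed $a$-subset $X$ exactly when $|X\cap Y|=\lambda$, which amounts to choosing $\lambda$ of the $a$ elements of $X$ together with $b-\lambda$ of the $m-a$ elements outside $X$, so every $X\in\mathcal{X}$ has the common degree ${a\choose \lambda}{m-a\choose b-\lambda}$. Condition~\ref{item1}) thus holds, and exactly as in the proof of Theorem~\ref{thm:bigraph} the number of symbols $``*"$ in each column is $Z={m\choose b}-{a\choose \lambda}{m-a\choose b-\lambda}$, the stated value. Condition~\ref{item2}) requires nothing new: \cite{Jen1997subset} already proves that each of S$1$, S$2$ is a strong edge coloring of $S_m(a,b,\lambda)$, so I would simply keep whichever uses fewer colors, giving $S=\min\{S_1,S_2\}$ with $S_1,S_2$ the counts in \eqref{eqn:S1} and \eqref{eqn:S2}.

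The crux is color-regularity, needed to promote the array from a PDA to a $g$-PDA. I would compute each color class directly. In S$1$ a class $\mathcal{E}_{D,I}$ is fixed by the symmetric difference $D$ (size $a+b-2\lambda$) and the intersection $I$ (size $\lambda$); its edges $(X,Y)$ are in bijection with the ways to split $D$ into $X\setminus Y$ of size $a-\lambda$ and $Y\setminus X$ of size $b-\lambda$, hence every class has exactly ${a+b-2\lambda\choose a-\lambda}$ edges. In S$2$ a class $\mathcal{E}^{U,V}$ is fixed by $U=X\setminus Y$ and $V=Y\setminus X$, and its edges are in bijection with the choices of the $\lambda$-element intersection among the $m-(a+b-2\lambda)$ elements outside $U\cup V$, so every class has exactly ${m-(a+b-2\lambda)\choose \lambda}$ edges. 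Both sizes are independent of the particular class, so each coloring is color-regular.

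It then remains to match the minimum with the maximum. Since every edge receives exactly one color, for each of S$1$ and S$2$ the product of the number of colors and the common class size equals the total edge count $|\mathcal{E}|={m\choose a}{a\choose \lambda}{m-a\choose b-\lambda}$; the two colorings therefore share the same product, so the one with fewer colors is exactly the one with the larger class. Selecting the coloring that attains $S=\min\{S_1,S_2\}$ thus forces $g=\max\{{a+b-2\lambda\choose a-\lambda},{m-(a+b-2\lambda)\choose \lambda}\}$, and Theorem~\ref{thm:bigraph} certifies $\bm A$ as a $g$-$(K,F,Z,S)$ PDA with precisely the claimed parameters. The only genuine work is the two class-size counts together with this min/max bookkeeping, since the strong-coloring property itself is imported wholesale from \cite{Jen1997subset}.
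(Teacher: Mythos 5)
Your proposal follows essentially the same route as the paper's own proof: build the array from the subset graph $S_m(a,b,\lambda)$ with the colorings S$1$ and S$2$ of \cite{Jen1997subset}, invoke Theorem~\ref{thm:bigraph} after the degree count ${a\choose\lambda}{m-a\choose b-\lambda}$, compute the color-class sizes ${a+b-2\lambda\choose a-\lambda}$ and ${m-(a+b-2\lambda)\choose\lambda}$, and take the coloring with fewer colors. Your double-counting observation that both colorings share the product $S\cdot g=|\mathcal{E}|$, so minimizing $S$ forces the stated $g$, is a slightly more explicit justification of a step the paper merely asserts (and you also silently correct the paper's typo $|D|=a+b-\lambda$, which should be $a+b-2\lambda$), but the argument is the same.
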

\begin{proof}
The arrays are constructed from the $S_m(a,b,\lambda)$ with coloring strategies S$1$ and S$2$. Obviously, in $S_m(a,b,\lambda)$, each vertical in $\mathcal{X}$ has a degree ${a\choose \lambda}{m-a\choose b-\lambda}$, thus with the fact that S$1$ and S$2$ are both strong edge colorings on $S_{m}(a,b,\lambda)$, by Theorem \ref{thm:bigraph}, the constructed arrays are actually PDAs.  As for parameters, $K=|\mathcal{X}|={m\choose a}$, $F=|\mathcal{Y}|={m\choose b}$. Since each vertical in $\mathcal{X}$ has a degree ${a\choose \lambda}{m-a\choose b-\lambda}$, $Z={m\choose b}-{a\choose \lambda}{m-a\choose b-\lambda}$.

For S$1$ and S$2$, the number they use are \eqref{eqn:S1} and \eqref{eqn:S2} respectively. We choose the PDA with smaller $S$. Finally, it is easy to see
\begin{align}
|\mathcal{E}_{D,I}|&={a+b-2\lambda\choose a-\lambda}\notag\\
 |\mathcal{E}^{U,V}|&={m-(a+b-2\lambda)\choose \lambda}\notag
\end{align}
thus choose the PDA with smaller $S$ is equivalent to choose the PDA with larger coding gain. This results the coding gain $g$.
\end{proof}
\begin{remark}  By \eqref{eqn:S1} and \eqref{eqn:S2}, when $\lambda=0$, \eqref{eqn:S1} is smaller than \eqref{eqn:S2}, while when $\lambda=\min\{a,b\}$, \eqref{eqn:S2} is smaller than \eqref{eqn:S1}. For a general $\lambda$, it needs to compare \eqref{eqn:S1} and \eqref{eqn:S2} to decide which strategy will leads to smaller $S$ (thus smaller rate).
\end{remark}
\begin{remark} Note that, when $m=K, a=1,b=t$, Theorem \ref{thm:subset} results a $(t+1)$-$\left(K,{K\choose t}, {K-1\choose t-1},{K\choose t+1}\right)$ PDA. This is exactly the Ali-Niesen scheme. Furthermore, when $\lambda=0$, Theorem \ref{thm:subset} results a ${a+b\choose a}$-$\left({m\choose a},{m\choose b},{m\choose b}-{m-a\choose b}, {m\choose a+b}\right)$ PDA. This is the exactly the Scheme 1 in \cite{Ge2016hyper}. Thus, Theorem \ref{thm:subset} includes both Ali-Niesen scheme and the extended version in \cite{Ge2016hyper} as special cases.
\end{remark}

\section{Comparison with Ali-Niesen Scheme}\label{sec:performance}

 Since we can hardly express $R$ and $F$ as a function of $K$, it seems hard to measure the performance of the scheme in Theorem \ref{thm:subset}. But   Theorem \ref{thm:subset} may contributes a lot of schemes with flexible tradeoff between $F$ and $R$. An example is as follows:

 Let $a=2,\lambda=1,m=2b$, then Theorem \ref{thm:subset} results a scheme for a $(K,M,N)$ system where $K=b(2b-1)$, $\frac{M}{N}=\frac{b-1}{2b-1}\approx\frac{1}{2}$. Note that, this construction is not included in \cite{Ge2016hyper}.  For such system, we compare Ali-Niesen scheme with the New scheme in Theorem \ref{thm:subset}.
    \begin{table}[!htp]
  \centering
  \caption{Comparison with Ali-Niesen scheme} \label{table2}
  \normalsize{
  \begin{tabular}{ccccc}
    \toprule\toprule
  Scheme & Ali-Niesen &New  \\
   \midrule
  $R$ &$\frac{b^2}{b^2-b+1}$&$b$ \\
  $F$&${b(2b-1)\choose b(b-1)}$ &  ${2b\choose b}$\\
 \bottomrule\bottomrule
  \end{tabular}}
\end{table}

It can be seen that, with an expense of a multiple of $b-1+\frac{1}{b}$, (less than $b$) in rate, $F$ can be decreased dramatically, from ${b(2b-1)\choose b(b-1)}$ to ${2b\choose b}$. Note that, by Stirling's formular, we can derive
\begin{align}
{b(2b-1)\choose b(b-1)}&\sim \frac{1}{\sqrt{2\pi}b}\sqrt{\frac{2b-1}{b-1}} \left(\frac{2b-1}{b}\right)^{b^2} \left(\frac{2b-1}{b-1}\right)^{b^2-b}\notag\\
&=\frac{1}{\sqrt{2\pi}b}\left(4+\frac{1}{b^2-b}\right)^{b^2}\cdot\left(\frac{b-1}{2b-1}\right)^{b-\frac{1}{2}}\notag\\
{2b\choose b}&\sim \frac{1}{\sqrt{\pi b}}\cdot4^{b}\notag
\end{align}

In summary, the new scheme  decreases $F$ from approximately $O(4^{b^2})$ to $O(4^b)$. Note that since $b=\frac{1+\sqrt{1+8K}}{2}\approx\sqrt{2K}$, this indicates that in the new scheme, $F$ increases sub-exponentially with $K$, but $R$ increases approximately the same as the square root of $K$.  Table \ref{table6} lists some numerical results.

  \begin{table}[!htp]\centering\caption{ Comparisons ($a=2,\lambda=1, m=2b, b=3,4,5$)}\label{table6}
        \normalsize{\begin{tabular}{cccc}   \toprule\toprule
         \multicolumn{1}{c}{$\left(K,\frac{M}{N}\right)$}&Performance&Ali-Niesen& New\\\midrule
        \multirow{2}{*}{$(15,\frac{2}{5})
        $} &$R$    & 1.2857&  3 \\
        &$F$ &   5005  &   20 \\\midrule
        \multirow{2}{*}{$(28,\frac{3}{7})
        $} &$R$    & 1.2307&  4\\
        &$F$ &   30421755  &   70 \\\midrule
        \multirow{2}{*}{$(45,\frac{4}{9})
        $} &$R$    & 1.1904&  5 \\
        &$F$ &    3169870830126 & 252   \\
   \bottomrule\bottomrule
        \end{tabular}}
\end{table}

\section{Conclusion}\label{sec:con}
In this paper, we bridge the placement delivery array in coded caching with the strong edge coloring of bipartite graph. It turns out that a PDA is equivalent to a strong edge colored bipartite graph. This connection allows us to utilize the existing structures in the bipartite graph area to construct new PDAs. Then a class of PDA is  constructed according to the (generalized) subset graph \cite{Jen1997subset}. This results in a more general class of PDA compared to the extended version of Ali-Niesen scheme in \cite{Ge2016hyper}.

 We believe that, this connection between PDA and strong edge coloring of bipartite graphs provides a new perspective to coded caching. It is possible to utilize bipartite graph to do further research on coded caching. Meanwhile, further progress on strong edge colorings of bipartite graph will help on discovering new coded caching schemes.


%
%
%



%
\bibliographystyle{IEEEtran}

\newpage



\end{document}